\documentclass[final,3p,12pt,sort&compress]{elsarticle}
\makeatletter
\def\ps@pprintTitle{%
 \let\@oddhead\@empty
 \let\@evenhead\@empty
 \def\@oddfoot{\centerline{\thepage}}%
 \let\@evenfoot\@oddfoot}
\makeatother
\usepackage{fullpage}
\usepackage{indentfirst}
\usepackage{verbatim}
\usepackage{amssymb,amsmath, amsthm}
\usepackage{slashed}
\usepackage[bookmarks]{hyperref}
\numberwithin{equation}{section}
%\title{Linear and angular momentum spaces for Majorana spinor fields}
%\author{Leonardo Pedro\\ Centro de Fisica Teorica de Particulas,
%  Portugal\\ leonardo@cftp.ist.utl.pt}
%\date{\today}
\theoremstyle{plain}% default
\newtheorem{thm}{Theorem}[section]

\newtheorem{prop}[thm]{Proposition}
\newtheorem{rem}[thm]{Remark}

\theoremstyle{definition}
\newtheorem{defn}[thm]{Definition}

\theoremstyle{remark}

\journal{Nuclear Physics B}

\begin{document}

\begin{frontmatter}

%% Title, authors and addresses

%% use the tnoteref command within \title for footnotes;
%% use the tnotetext command for the associated footnote;
%% use the fnref command within \author or \address for footnotes;
%% use the fntext command for the associated footnote;
%% use the corref command within \author for corresponding author footnotes;
%% use the cortext command for the associated footnote;
%% use the ead command for the email address,
%% and the form \ead[url] for the home page:
%%
\title{The Majorana spinor representation of the Poincare group}
%% \tnotetext[label1]{}
%%\author{Leonardo Pedro\corref{cor1}\fnref{label2}} 
%% \ead[url]{home page}
%% \fntext[label2]{}
%% \cortext[cor1]{}
%% \address{Address\fnref{label3}}
%% \fntext[label3]{}

%\title{}

%% use optional labels to link authors explicitly to addresses:
\author[label1]{Leonardo Pedro}
\address[label1]{Centro de Fisica Teorica de Particulas, CFTP,
  Departamento de Fisica, Instituto Superior Tecnico, Universidade
  Tecnica de Lisboa, Avenida Rovisco Pais nr. 1, 1049-001 Lisboa, Portugal}
\ead{leonardo@cftp.ist.utl.pt}
\date{\today}
%% \address[label2]{<address>}

%\author{}

%\address{}

\begin{abstract}

There are Poincare group representations
on complex Hilbert spaces, like the Dirac spinor field, 
or real Hilbert spaces, like the electromagnetic field tensor.
The Majorana spinor is an element of a 4
dimensional real vector space. The Majorana spinor field is
a space-time dependent Majorana spinor, solution of the free Dirac equation.

The Majorana-Fourier
and Majorana-Hankel transforms of Majorana spinor fields are defined
and related to the linear and angular momenta of a spin one-half
representation of the Poincare group.
We show that the Majorana spinor field with finite mass is an unitary
irreducible projective representation of the Poincare group on a real
Hilbert space.

Since the Bargmann-Wigner
equations are valid for all spins and are based on
the free Dirac equation, these results
open the possibility to study Poincare group representations with
arbitrary spins on real Hilbert spaces.

\end{abstract}

\begin{keyword}
%% keywords here, in the form: keyword \sep keyword
Majorana spinors \sep Poincare group \sep unitary representation
%% MSC codes here, in the form: \MSC code \sep code
%% or \MSC[2008] code \sep code (2000 is the default)

\end{keyword}

\end{frontmatter}

%\tableofcontents
%\pagebreak
\section{Introduction}

The irreducibility of a group representation may
depend on whether the representation space is a real or complex
Hilbert space. There are Poincare group representations
on complex Hilbert spaces, like the Dirac spinor fields, or real
Hilbert spaces, like the electromagnetic field tensor.

The Poincare group, also called inhomogeneous Lorentz group, is the
semi-direct product of the translations and
Lorentz groups\cite{Hall}.  Whether or not the Lorentz and Poincare groups
include the parity and time reversal transformations depends on the context and
authors. To be clear, we use the prefixes full/restricted when
including/excluding parity and time reversal transformations. A
projective representation of the Poincare group on a complex/real Hilbert space is an homomorphism,
defined up to a complex phase/sign, from the group to the automorphisms of the Hilbert
space. The Pin(3,1) group representations are projective representations of the full Lorentz
group\cite{pin}, while the SL(2,C) subgroup representations are projective
representations of the restricted Lorentz subgroup.

The unitary projective representations of the Poincare group on
complex Hilbert spaces were studied by many authors, including
Wigner\cite{knapp,wigner,mackey,ohnuki,poincare,weinberg}.
Since Quantum Mechanics is based on
complex Hilbert spaces \cite{yang}, these studies were very
important in the evolution of the role of symmetry in the Quantum
theory\cite{symmetry}. Although Quantum Theory in real Hilbert spaces
was investigated before 
\cite{realQM, realqft, realqftII, realqftIII, quantumstatistics,
  hestenes_old, hestenes_recent}, to our knowledge, the unitary
projective representations of the Poincare group on real Hilbert
spaces were not studied.

The Dirac spinor is an element of a 4 dimensional
complex vector space, while the Majorana spinor is an
element of a 4 dimensional real vector space 
\cite{todorov}. The Majorana spinor
representation of both SL(2,C) and Pin(3,1) is
irreducible \cite{irreducible}. The spinor fields,
space-time dependent spinors, are solutions of the free Dirac equation
\cite{Dirac}. The Hilbert space of Dirac spinor fields is complex,
while the Hilbert space of Majorana spinor fields is real.

To study a system of many neutral particles with spin one-half,
Majorana spinor fields are extended with second quantization operators
and are called Majorana quantum fields or Majorana fermions 
\cite{Majorana, pal, dreiner}. There are important
applications of the Majorana quantum field in
theories trying to explain phenomena in neutrino physics, dark
matter searches, the fractional quantum Hall effect and superconductivity
\cite{solidstate}. Note that Majorana quantum fields are related to but
are different from the Majorana spinor fields.

The Bargmann-Wigner equations\cite{BW,allspins} are based on the free
Dirac equation and are valid for all spins.
The free Dirac equation is diagonal in the Newton-Wigner
representation\cite{newton}, related to the Dirac representation
through a Foldy-Wouthuysen transformation \cite{revfoldy,foldy}.
In the context of Clifford Algebras, there are studies
on the geometric square roots of -1 
\cite{hestenes_old, hestenes_recent, squareroot} and on the
generalizations of the Fourier transform \cite{clifford}, with
applications to image processing\cite{image}.

In the following we will study the spin one-half representation of the
Poincare group on the real Hilbert space of Majorana spinor fields.
In chapter 2 we define the Majorana matrices and spinors. 
In chapter 3 we study the Majorana spinor projective representation of
the Lorentz group and show that the Majorana spinor representations of the groups
SU(2), SL(2,C) and Pin(3,1) are irreducible.
In chapter 4 we relate the Majorana and Pauli spinor fields.
In 5 and 6 we define the Majorana-Fourier and Majorana-Hankel
transforms of a Majorana spinor.
In 7 we show that the projective Poincare group representation on the Majorana
spinor field is unitary and irreducible. 
We relate the Majorana transforms to the linear and angular momenta of
a spin one-half representation of the Poincare group and show that the transition
operator is causal. In 8, we extend the Majorana transforms to include
the energy.

\section{Majorana, Dirac and Pauli Matrices and Spinors}
The Majorana matrices, $i\gamma^\mu$ with $\mu=0,1,2,3$,
are the Dirac Gamma matrices, $\gamma^\mu$, times the imaginary unit.
The notation maintains explicit the relation between the Majorana and
Dirac Gamma matrices. 

\begin{defn}
The Majorana matrices, $i\gamma^\mu$, are $4\times 4$ unitary matrices
with anti-commutator $\{i\gamma^\mu,i\gamma^\nu\}$:
\begin{align*}
(i\gamma^\mu)(i\gamma^\nu)+(i\gamma^\nu)(i\gamma^\mu)=-2g^{\mu\nu},\ \mu,\nu=0,1,2,3
\end{align*}
Where $g=diag(1,-1,-1,-1)$ is the Minkowski metric.
The pseudo-scalar is $i\gamma^5\equiv
-\gamma^0\gamma^1\gamma^2\gamma^3$.
\end{defn}

\begin{rem}
Pauli's fundamental theorem\cite{diracmatrices} implies that the Majorana matrices are unique up to
an unitary similarity transformation.
\end{rem}

The product of 2 Dirac Gamma
matrices is minus the
product of 2 corresponding Majorana matrices:
$\gamma^\mu\gamma^\nu=-i\gamma^\mu i\gamma^\nu$.

In a Majorana basis, the Majorana matrices are $4\times 4$ real
orthogonal matrices. An example of the Majorana matrices in a particular Majorana basis is:
\begin{align*}
\begin{array}{llllll}
\label{basis}
i\gamma^1=&\left[ \begin{smallmatrix}
+1 & 0 & 0 & 0 \\
0 & -1 & 0 & 0 \\
0 & 0 & -1 & 0 \\
0 & 0 & 0 & +1 \end{smallmatrix} \right]&
i\gamma^2=&\left[ \begin{smallmatrix}
0 & 0 & +1 & 0 \\
0 & 0 & 0 & +1 \\
+1 & 0 & 0 & 0 \\
0 & +1 & 0 & 0 \end{smallmatrix} \right]&
i\gamma^3=\left[ \begin{smallmatrix}
0 & +1 & 0 & 0 \\
+1 & 0 & 0 & 0 \\
0 & 0 & 0 & -1 \\
0 & 0 & -1 & 0 \end{smallmatrix} \right]\\
\\
i\gamma^0=&\left[ \begin{smallmatrix}
0 & 0 & +1 & 0 \\
0 & 0 & 0 & +1 \\
-1 & 0 & 0 & 0 \\
0 & -1 & 0 & 0 \end{smallmatrix} \right]&
i\gamma^5=&\left[ \begin{smallmatrix}
0 & -1 & 0 & 0 \\
+1 & 0 & 0 & 0 \\
0 & 0 & 0 & +1 \\
0 & 0 & -1 & 0 \end{smallmatrix} \right]&
=-\gamma^0\gamma^1\gamma^2\gamma^3
\end{array}
\end{align*}

In reference \cite{realgamma} it is proved that the set of five
anti-commuting $4\times 4$ real matrices is unique up to
isomorphisms. So it is not possible to obtain the euclidean signature
for the metric, for instance.

\begin{defn}
The Dirac spinor is a $4\times 1$ complex column matrix, that
transforms in a precise way under the action of Lorentz
transformations.
\end{defn}

The space of Dirac spinors is a 4 dimensional complex vector space.

\begin{defn}
Let $S$ be a unitary matrix such that $S i\gamma^\mu S^{\dagger}$ is
real, for $\mu=0,1,2,3$.

The set of Majorana spinors, $Pinor$, is the subset of Dirac spinors $u$
verifying the Majorana condition:
\begin{align*}
(S u)^*=(S u)
\end{align*}
Where $^*$ denotes complex conjugation and $^\dagger$ denotes
hermitian conjugate.
\end{defn}

\begin{rem}
Let $W$ be a subset of a vector space $V$ over $\mathbb{C}$. 
$W$ is a real vector space iff:
 $0\in W$;
 If $u,v\in W$, then $u+v\in W$;
 If $u\in W$ and $c\in \mathbb{R}$, then $c u\in W$.
\end{rem}

From the previous remark, the set of Majorana spinors is a 4
dimensional real vector space, while the set of Dirac spinors is a 8
dimensional real vector space. Note that the linear combinations of
Majorana spinors with complex scalars do not verify the Majorana
condition. The Majorana spinor, in a Majorana basis, is a $4\times 1$
real column matrix.

\begin{defn}
The Pauli matrices $\sigma^k,\ k\in\{1,2,3\}$ are $2\times 2$
hermitian, unitary, anti-commuting, complex matrices.
The Pauli spinor is a  $2\times 1$ complex column matrix. The space of
Pauli spinors is denoted by $Pauli$.
\end{defn}

The space of Pauli spinors, $Pauli$, is a 2 dimensional complex vector
space and a 4 dimensional real vector space.

\begin{rem}
Pauli's fundamental theorem guarantees that the Pauli matrices are unique up to
an unitary similarity transformation.
\end{rem}

\section{Majorana spinor representation of the Lorentz group}

\begin{rem} 
The Lorentz group, $O(1,3)\equiv\{\lambda \in \mathbb{R}^{4\times 4}: \lambda^T \eta \lambda=\eta \}$, is the set of
real matrices that leave the metric, $\eta=diag(1,-1,-1,-1)$,
invariant.

The proper orthochronous Lorentz subgroup is defined by $SO^+(1,3)\equiv\{\lambda \in
O(1,3): det(\lambda)=1, \lambda^0_{\ 0}>0 \}$. 
It is a normal subgroup. The discrete Lorentz subgroup of parity and time-reversal is $\Delta \equiv \{1,\eta,-\eta,-1\}$.

The Lorentz group is the semi-direct product of the previous
subgroups, $O(1,3)=\Delta \ltimes SO^+(1,3)$.  
\end{rem}

\begin{rem}
$Pin(3,1)$ \cite{pin} is the group of endomorphisms of Majorana
spinors that leave the space of linear
combinations of the Majorana matrices invariant, that is:
\begin{align*}
Pin(3,1)\equiv \Big\{S\in End(Pinor):\ det S=1,\
S^{-1}(i\gamma^\mu)S=\Lambda^\mu_{\ \nu}i\gamma^\nu,\
 \Lambda\in O(1,3) \Big\}
\end{align*}

The map $\Lambda:Pin(3,1)\to O(1,3)$ defined by:
\begin{align*}
(\Lambda(S))^\mu_{\ \nu}i\gamma^\nu\equiv S^{-1}(i\gamma^\mu)S
\end{align*}
is two-to-one and surjective. It defines a group homomorphism.

$Pin(3,1)$ is the semi-direct product of the groups 
$Spin^+(3,1)\equiv \{e^{\theta^j
  i\gamma^5\gamma^0\gamma^j+b^j\gamma^0\gamma^j}: \theta^j,b^j\in
\mathbb{R},\ j\in\{1,2,3\}\}$
and $\Omega \equiv \{\pm 1,\pm i\gamma^0,\pm \gamma^0\gamma^5,\pm
i\gamma^5\}$. The group homomorphisms 
$\Lambda:Spin^+(3,1)\to SO^+(1,3)$ and $\Lambda:\Omega\to \Delta$ are
two-to-one and surjective. $Spin^+(3,1)$ is isomorphic to
$SL(2,\mathbb{C})$, while the unitary subgroup $Spin^+(3,1)\cap SU(4)=\{e^{\theta^j
  i\gamma^5\gamma^0\gamma^j}: \theta^j\in
\mathbb{R},\ j\in\{1,2,3\}\}$
is isomorphic to $SU(2)$.
\end{rem}

\begin{defn}
The Majorana spinor representation of $Pin(3,1)$ and subgroups
is defined by the action of $S\in Pin(3,1)$ in the space of
Majorana spinors.
\end{defn}

\begin{rem}
A unitary matrix representation of a group is irreducible iff there is no
basis where all the matrices of the representation can be block
diagonalized (in a non-trivial way).
\end{rem}

\begin{prop}
\label{prop:irreducible}
The Majorana spinor representation of
$Spin^+(1,3)\cap SU(4)$  (isomorphic to $SU(2)$), is irreducible.
\end{prop}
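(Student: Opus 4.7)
The plan is to identify the Lie algebra of $Spin^+(1,3)\cap SU(4)$ with three anti-commuting real operators on $Pinor$ that each square to $-I$, generating (as an associative real algebra) a copy of the quaternions inside $\mathrm{End}_{\mathbb{R}}(Pinor)$; then, since $\dim_{\mathbb{R}}Pinor = 4$, any nonzero real subspace invariant under these operators must equal $Pinor$. Concretely, set $A_j \equiv i\gamma^5\gamma^0\gamma^j$ for $j=1,2,3$, so that the group consists of the exponentials $\exp(\theta^j A_j)$ with $\theta^j\in\mathbb{R}$. A routine Clifford computation using $\{i\gamma^\mu,i\gamma^\nu\}=-2g^{\mu\nu}$ together with the fact that $i\gamma^5$ anti-commutes with each $\gamma^\mu$ and satisfies $(i\gamma^5)^2=-I$ yields
\begin{align*}
A_j^2 = -I, \qquad A_jA_k = -A_kA_j \quad (j\neq k),
\end{align*}
i.e.\ the defining relations of the imaginary quaternion units.

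The key step is then that for any real $\alpha,\beta,\gamma,\delta$, using only the two relations above, direct expansion yields
\begin{align*}
(\alpha I+\beta A_1+\gamma A_2+\delta A_3)(\alpha I-\beta A_1-\gamma A_2-\delta A_3) = (\alpha^2+\beta^2+\gamma^2+\delta^2)\, I,
\end{align*}
so the left factor is an invertible real $4\times 4$ matrix whenever $(\alpha,\beta,\gamma,\delta)\neq (0,0,0,0)$. Applying this to any nonzero $w\in Pinor$, the relation $\alpha w+\beta A_1 w+\gamma A_2 w+\delta A_3 w=0$ forces all the coefficients to vanish, so $\{w, A_1 w, A_2 w, A_3 w\}$ is $\mathbb{R}$-linearly independent and hence a basis of the $4$-dimensional real vector space $Pinor$.

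To finish, let $W\subseteq Pinor$ be a nonzero real subspace invariant under every $\exp(\theta^j A_j)$. Because $W$ is finite-dimensional and hence closed, differentiating the curve $t\mapsto \exp(tA_j)w$ at $t=0$ shows $A_j W\subseteq W$; so for any $0\neq w\in W$ the basis $w, A_1 w, A_2 w, A_3 w$ lies in $W$, whence $W = Pinor$. The only real obstacle is the sign bookkeeping in the three Clifford manipulations that establish the quaternionic relations; once those relations are in hand, the remainder is elementary linear algebra.
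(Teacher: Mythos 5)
Your proof is correct, and it rests on the same central observation as the paper's: the three operators $A_j=i\gamma^5\gamma^0\gamma^j$ lie in (and generate) the group, square to $-I$, and anti-commute, i.e.\ they realize the quaternion units inside $\mathrm{End}_{\mathbb{R}}(Pinor)$. Where you diverge is in how irreducibility is extracted from these relations. The paper argues by contradiction on block structure: if the representation were reducible, some basis would simultaneously block-diagonalize all three $A_j$, and the resulting $1\times 1$ or $2\times 2$ real blocks would themselves have to square to $-I$ and anti-commute, which is impossible (no real number squares to $-1$, and $M_2(\mathbb{R})$ contains no pair of anti-commuting complex structures since that would embed $\mathbb{H}$ into $M_2(\mathbb{R})$). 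You instead prove that every nonzero vector is cyclic: the norm identity $(\alpha I+\beta A_1+\gamma A_2+\delta A_3)(\alpha I-\beta A_1-\gamma A_2-\delta A_3)=(\alpha^2+\beta^2+\gamma^2+\delta^2)I$ makes every nonzero element of the generated algebra invertible, so $\{w,A_1w,A_2w,A_3w\}$ is a basis for any $w\neq 0$, and any nonzero invariant subspace is everything. Your route is arguably tighter: it is constructive rather than by contradiction, it handles all possible block sizes at once (the paper's dimension-count implicitly relies on any nontrivial splitting of $4$ containing a part of size $1$ or $2$), and it makes explicit the passage from invariance under the group $\exp(\theta^jA_j)$ to invariance under the $A_j$ themselves (by differentiation; one could also just note $A_j=\exp(\tfrac{\pi}{2}A_j)$ is a group element). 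The paper's version buys brevity and a picture of \emph{why} reducibility fails --- small real blocks cannot carry a quaternionic structure --- at the cost of leaving those two small points implicit.
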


\begin{proof}
In a Majorana basis, the automorphisms of
Majorana spinors are $4\times 4$ non-singular real matrices. 
We can check that $i\gamma^5\gamma^0\gamma^j\in  Spin^+(1,3)\cap
SU(4)$, $j\in\{1,2,3\}$. 
These matrices square to $-1$ and anti-commute. If there is a basis
where they are all block diagonal, then the blocks also square to $-1$
and anti-commute. But there is only one (linear independent) $2\times
2$ real matrix 
that squares to $-1$ and no $1\times 1$ real matrix that
squares to $-1$. Therefore, the
representation is irreducible.
\end{proof}

\section{Hilbert spaces of Majorana and Pauli spinor fields}
\begin{defn}
The complex Hilbert space of Pauli spinors, $Pauli$, has the internal
product:
\begin{align*}
<\phi,\psi>=\phi^\dagger\psi;\ \phi,\psi\in Pauli
\end{align*}
\end{defn}

\begin{defn}
The real Hilbert space of Majorana spinors, $Pinor$, has the internal
product:
\begin{align*}
<\Phi,\Psi>=\Phi^\dagger\Psi;\ \Phi,\Psi\in Pinor
\end{align*}
\end{defn}

\begin{defn}
\label{defn:Theta}
Consider that $\{M_+,M_-,i\gamma^0M_+,i\gamma^0M_-\}$ and $\{P_+,P_-,iP_+,iP_-\}$ are orthonormal basis of
the 4 dimensional real vector spaces $Pinor$ and $Pauli$, respectively, verifying:
\begin{align*}
\gamma^3\gamma^5 M_\pm=\pm M_\pm&,\ \sigma^3 P_\pm=\pm P_\pm
\end{align*}
Let $H$ be a real Hilbert space. 
For all $h\in H$, the bijective linear map
$\Theta_H:Pauli\otimes_{\mathbb{R}} H\to Pinor\otimes_{\mathbb{R}}H$ is defined by:
\begin{align*}
\Theta_H(h\otimes_{\mathbb{R}} P_+)=h\otimes_{\mathbb{R}} M_+,&\ \Theta_H(h \otimes_{\mathbb{R}}
iP_+)=h\otimes_{\mathbb{R}} i\gamma^0 M_+\\
\Theta_H(h\otimes_{\mathbb{R}} P_-)=h\otimes_{\mathbb{R}} M_-,&\ \Theta_H(h\otimes_{\mathbb{R}} iP_-)=h\otimes_{\mathbb{R}}i\gamma^0 M_-
\end{align*}
\end{defn}

\begin{defn}
Let $H_n$, with $n\in\{1,2\}$, be two real Hilbert spaces 
and $U:Pauli\otimes_{\mathbb{R}} H_1\to
Pauli\otimes_{\mathbb{R}} H_2$ be an operator.
The operator $U^\Theta:Pinor\otimes_{\mathbb{R}} H_1\to
Pinor\otimes_{\mathbb{R}} H_2$ is defined as
$U^\Theta\equiv
\Theta_{H_2}\circ U\circ \Theta^{-1}_{H_1}$.
\end{defn}

\begin{rem}
Let $H_n$, with $n\in\{1,2\}$, be two Hilbert spaces with internal
products $<,>:H_n\times H_n\to
\mathbb{F}$,($\mathbb{F}=\mathbb{R},\mathbb{C}$). 
A linear operator $U:H_1\to H_2$ is unitary iff:

1) it is surjective;

2) for all $x\in H_1$, $<U(x) , U(x)>=<x, x>$.
\end{rem}

\begin{rem}
Given two real Hilbert spaces $H_1$, $H_2$ and an unitary operator
$U:H_1\to H_2$, the inverse operator $U^{-1}:H_2\to H_1$ is defined by:
\begin{align*}
<x,U^{-1} y>=<U x,y>,\ x\in H_1, y\in H_2
\end{align*}
\end{rem}

\begin{prop}
\label{prop:unitary}
Let $H_n$, with $n\in\{1,2\}$, be two real Hilbert spaces.
The following two statements are equivalent:

1) The operator $U:Pauli\otimes_{\mathbb{R}} H_1\to
Pauli\otimes_{\mathbb{R}} H_2$ is unitary;

2) The operator $U^\Theta:Pinor\otimes_{\mathbb{R}} H_1\to
Pinor\otimes_{\mathbb{R}} H_2$ is unitary.
\end{prop}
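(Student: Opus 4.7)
The strategy is to reduce the equivalence to the single structural fact that, for every real Hilbert space $H$, the map $\Theta_H$ is itself a unitary isomorphism of real Hilbert spaces. Once this is established, the proposition is immediate: $U^\Theta = \Theta_{H_2}\circ U \circ \Theta_{H_1}^{-1}$ is a composition of unitaries whenever $U$ is unitary, and conversely $U = \Theta_{H_2}^{-1}\circ U^\Theta \circ \Theta_{H_1}$ recovers $U$ from $U^\Theta$ as a composition of unitaries.

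First I would verify that $\Theta_H$ itself is unitary. By Definition~\ref{defn:Theta} it is already a real-linear bijection, so only inner-product preservation needs checking. Endow the real tensor products $Pauli\otimes_{\mathbb{R}} H$ and $Pinor\otimes_{\mathbb{R}} H$ with the standard tensor-product inner products, characterised on simple tensors by $\langle v_1\otimes h_1, v_2\otimes h_2\rangle = \langle v_1,v_2\rangle\,\langle h_1,h_2\rangle$. Picking any orthonormal basis $\{h_\alpha\}$ of $H$, the families $\{P_i\otimes h_\alpha\}$ and $\{M_j\otimes h_\alpha\}$, with $P_i$ ranging over $\{P_+,P_-,iP_+,iP_-\}$ and $M_j$ over $\{M_+,M_-,i\gamma^0 M_+,i\gamma^0 M_-\}$, are orthonormal bases of the two tensor products. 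By inspection of the four defining equations in Definition~\ref{defn:Theta}, $\Theta_H$ sends the first basis bijectively onto the second, so it is an isometric surjection and hence unitary.

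With $\Theta_{H_1}$ and $\Theta_{H_2}$ both unitary, the two implications now follow formally. For $(1)\Rightarrow(2)$, surjectivity of $U^\Theta$ is inherited from surjectivity of $U$ and bijectivity of the $\Theta$'s, and norm preservation propagates along the composition $\Theta_{H_2}\circ U \circ \Theta_{H_1}^{-1}$ since each factor preserves norms. The direction $(2)\Rightarrow(1)$ is identical after applying $\Theta_{H_2}^{-1}$ on the left and $\Theta_{H_1}$ on the right.

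The only genuinely substantive point, which I would write out carefully, is the claim that $\{P_+,P_-,iP_+,iP_-\}$ and $\{M_+,M_-,i\gamma^0 M_+,i\gamma^0 M_-\}$ are orthonormal in $Pauli$ and $Pinor$ viewed as \emph{real} Hilbert spaces. This is assumed in Definition~\ref{defn:Theta}, but it tacitly requires using $\operatorname{Re}\langle\cdot,\cdot\rangle$ rather than the complex inner product $\phi^\dagger\psi$ on the Pauli side; once this convention is spelled out, the tensor-product orthonormality claim is standard and there is no real obstacle.
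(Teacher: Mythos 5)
Your proof is correct and follows essentially the same route as the paper: both reduce the equivalence to the fact that $\Theta_{H}$ is a norm-preserving bijection, so that surjectivity and isometry transfer across the conjugation $U\mapsto\Theta_{H_2}\circ U\circ\Theta_{H_1}^{-1}$. The only difference is that you explicitly justify the isometry of $\Theta_H$ via the orthonormal bases of Definition~\ref{defn:Theta} (and flag the real-inner-product convention on $Pauli$), whereas the paper simply asserts $\langle g,g\rangle=\langle\Theta_{H_1}(g),\Theta_{H_1}(g)\rangle$ without proof.
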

\begin{proof}
Because $\Theta_{H_n}$ is bijective, $U$ is surjective iff
$\Theta_{H_2}\circ U\circ \Theta^{-1}_{H_1}$ is surjective.

For all $g\in Pauli\otimes_{\mathbb{R}} H_1$, we have:
\begin{align*}
<g,g>=<\Theta_{H_1}(g),\Theta_{H_1}(g)>
\end{align*}
\begin{align*}
<U(g),U(g)>=<\Theta_{H_2}(U(g)),\Theta_{H_2}(U(g))>
\end{align*}

Since $\Theta_{H_n}$ is bijective, we get that the following two statements are
equivalent:

1) for all $g\in Pauli\otimes_{\mathbb{R}} H_1$, $<g,g>=<U(g),U(g)>$;

2) for all $g'\in Pinor\otimes_{\mathbb{R}}
H_1$, $<g',g'>=<\Theta_{H_2}(U(\Theta^{-1}_{H_1}(g'))),\Theta_{H_2}(U(\Theta^{-1}_{H_1}(g')))>$.
\end{proof}

\begin{defn}
The space of Majorana spinor fields over a set $S$, $Pinor(S)\equiv
Pinor\otimes_{\mathbb{R}} L^2(S)$, is the
real Hilbert space of Majorana spinors whose
entries, in a Majorana basis, are real Lebesgue square integrable functions of
$S$.
\end{defn}

\begin{defn}
The space of Pauli spinor fields over a set $S$,
$Pauli(S)\equiv Pauli \otimes_{\mathbb{R}}
L^2(S)$ is the complex Hilbert space of Pauli spinors whose
components are complex Lebesgue square integrable functions of
$S$.
\end{defn}

\section{Linear Momentum of Majorana spinor fields}
\begin{defn}
$L^2(\mathbb{R}^n)$ is the real Hilbert space of real functions of $n$ real
variables whose square is Lebesgue integrable in $\mathbb{R}^n$.
The internal product is:
\begin{align*}
<f,g>\equiv\int d^nx f(x)g(x),\ f,g \in L^2(\mathbb{R}^n)
\end{align*} 
\end{defn}

\begin{rem}
The Pauli-Fourier Transform $\mathcal{F}_P: Pauli(\mathbb{R}^n)\to
Pauli(\mathbb{R}^n)$ is an unitary operator defined by:
\begin{align*}
\mathcal{F}_P\{\psi\}(\vec{p})\equiv\int d^n\vec{x} \frac{e^{-i\vec{p}\cdot
  \vec{x}}}{\sqrt{(2\pi)^n}}\psi(\vec{x}),\ \psi\in Pauli(\mathbb{R}^n)
\end{align*}
Where the domain of the integral is $\mathbb{R}^n$.
\end{rem}

\begin{defn}
The Majorana-Fourier Transform $\mathcal{F}_M: Pinor(\mathbb{R}^3)\to
Pinor(\mathbb{R}^3)$ is an operator defined by:
\begin{align*}
\mathcal{F}_M\{\Psi\}(\vec{p})&\equiv \int d^3\vec{x}\ \frac{e^{-i\gamma^0\vec{p}\cdot
  \vec{x}}}{\sqrt{(2\pi)^3}}
\frac{\slashed p \gamma^0+m}{\sqrt{E_p+m}\sqrt{2E_p}}\Psi(\vec{x}),
\ \Psi\in Pinor(\mathbb{R}^3)
\end{align*}
Where the domain of the integral is $\mathbb{R}^3$, $m\geq 0$,
$E_p\equiv\sqrt{\vec{p}^2+m^2}$ and $\slashed p=
E_p\gamma^0-\vec{p}\cdot \vec{\gamma}$.
\end{defn}

\begin{prop}
The Majorana-Fourier Transform is an unitary operator.
\end{prop}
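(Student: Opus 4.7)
The plan is to prove unitarity by direct computation in a Majorana basis, establishing $\mathcal{F}_M^\dagger \mathcal{F}_M=\mathrm{id}$ and $\mathcal{F}_M \mathcal{F}_M^\dagger=\mathrm{id}$ simultaneously; the first gives inner-product preservation, the second surjectivity.

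In a Majorana basis, Majorana spinor fields are real valued, $i\gamma^0$ is a real antisymmetric matrix, and the prefactor
\begin{align*}
A(\vec p)\equiv\frac{\slashed p\,\gamma^0+m}{\sqrt{2E_p(E_p+m)}}=\frac{E_p+m-\vec p\cdot\vec\gamma\gamma^0}{\sqrt{2E_p(E_p+m)}}
\end{align*}
is real symmetric, since $\vec\gamma\gamma^0$ is Hermitian and, as a product of two matrices that are purely imaginary in a Majorana basis, purely real. Transposing the kernel of $\mathcal{F}_M$ therefore gives
\begin{align*}
\mathcal{F}_M^\dagger\{\Phi\}(\vec x)=\int\frac{d^3\vec p}{\sqrt{(2\pi)^3}}\,A(\vec p)\,e^{+i\gamma^0\vec p\cdot\vec x}\Phi(\vec p),
\end{align*}
reducing the problem to Fourier integrals over suitable products of $A$ and exponentials in $i\gamma^0$.

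The key algebraic device is the decomposition $A=A_s+A_a$ with $A_s=\sqrt{(E_p+m)/(2E_p)}$ a scalar commuting with $i\gamma^0$ and $A_a=-\vec p\cdot\vec\gamma\gamma^0/\sqrt{2E_p(E_p+m)}$ anticommuting with it. This gives the identities $A_s^2+A_a^2=1$, $A_s^2-A_a^2=m/E_p$, $(\vec p\cdot\vec\gamma\gamma^0)^2=|\vec p|^2=(E_p-m)(E_p+m)$, and $B_a\,e^{i\gamma^0\theta}=e^{-i\gamma^0\theta}B_a$ for any $B_a$ anticommuting with $i\gamma^0$; together these handle both compositions.

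For $\mathcal{F}_M^\dagger\mathcal{F}_M$ I would compute the integrand $A(\vec p)\,e^{i\gamma^0\vec p\cdot\vec y}A(\vec p)=\cos(\vec p\cdot\vec y)(1-\vec p\cdot\vec\gamma\gamma^0/E_p)+\sin(\vec p\cdot\vec y)(m/E_p)i\gamma^0$, with $\vec y=\vec x-\vec x'$, using the critical cross-term cancellation $A_s\,i\gamma^0 A_a+A_a\,i\gamma^0 A_s=0$; integration over $\vec p$ then gives $\delta^3(\vec y)$ from the first term, while the other two, being odd in $\vec p$, vanish by parity. For $\mathcal{F}_M\mathcal{F}_M^\dagger$ I would decompose $A(\vec p)A(\vec p')=B_c+B_a$ according to commutation with $i\gamma^0$ and integrate $e^{-i\gamma^0\vec p\cdot\vec x}(B_c+B_a)e^{+i\gamma^0\vec p'\cdot\vec x}$ over $\vec x$ to obtain $B_c\,\delta^3(\vec p-\vec p')+B_a\,\delta^3(\vec p+\vec p')$; the evaluations $B_c|_{\vec p'=\vec p}=A_s^2+A_a^2=1$ and $B_a|_{\vec p'=-\vec p}=-A_sA_a+A_aA_s=0$ (using that $A_a$ is odd in $\vec p$ and $A_s$ is a commuting scalar) finish the integral. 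The main obstacle is precisely these cross-term cancellations: the prefactor $A$ is not itself unitary, its eigenvalues $\sqrt{(E_p\pm|\vec p|)/E_p}$ lying off the unit circle, so unitarity of $\mathcal{F}_M$ is not a product of separately unitary pieces but a joint property of the prefactor and the exponential kernel that rests on the $A_s,A_a$ algebra with $i\gamma^0$ and the parity integrations above.
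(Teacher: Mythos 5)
Your proof is correct, and it takes a genuinely different route from the paper's. The paper does not compute $\mathcal{F}_M^\dagger\mathcal{F}_M$ and $\mathcal{F}_M\mathcal{F}_M^\dagger$ at all: it factors $\mathcal{F}_M=S\circ\mathcal{F}_P^\Theta$, where $\mathcal{F}_P^\Theta$ is the ordinary Pauli--Fourier transform transported to $Pinor(\mathbb{R}^3)$ by the isometry $\Theta$ (hence unitary by Proposition \ref{prop:unitary}), and $S$ is a pointwise multiplication operator in momentum space that mixes $\Psi(+\vec p)$ with $\Psi(-\vec p)$ through a $2\times 2$ block matrix built from $\sqrt{(E_p\pm m)/(2E_p)}$ and $\vec p\cdot\vec\gamma\gamma^0/|\vec p|$; unitarity then reduces to checking that this block matrix is orthogonal. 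The underlying algebra is identical to yours: the paper's splitting of the prefactor into the scalar $\sqrt{(E_p+m)/(2E_p)}$ and the term proportional to $\vec p\cdot\vec\gamma\gamma^0$ is exactly your $A=A_s+A_a$, the sign flip in the exponential when $A_a$ is moved past $e^{-i\gamma^0\vec p\cdot\vec x}$ is what produces the $(\vec p,-\vec p)$ coupling (your $\delta^3(\vec p+\vec p\,')$ term), and the orthogonality of the paper's block matrix is precisely your identities $A_s^2+A_a^2=1$ and $A_sA_a-A_aA_s=0$ combined with the oddness of $A_a$ in $\vec p$. What the factorization buys is that all distributional manipulations are delegated to the known unitarity of the scalar Fourier transform, so no delta functions appear; what your direct computation buys is self-containedness and an explicit verification of both isometry and surjectivity without invoking the $\Theta$ machinery. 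Your closing observation --- that $A(\vec p)$ alone is not unitary, so unitarity is a joint property of the prefactor and the $i\gamma^0$-exponential --- is exactly the point the paper encodes in the off-diagonal entries of $S$.
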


\begin{proof}
The Majorana-Fourier Transform can be written as:
\begin{align*}
\mathcal{F}_M\{\Psi\}(\vec{p})\equiv& \sqrt{\frac{E_p+m}{2E_p}}\Big(\int d^3\vec{x}\ \frac{e^{-i\gamma^0\vec{p}\cdot
  \vec{x}}}{\sqrt{(2\pi)^3}}\Psi(\vec{x})\Big)\\
-&\sqrt{\frac{E_p-m}{2E_p}}\frac{\vec{p}\cdot\vec{\gamma}\gamma^0}{|\vec{p}|}\Big(\int d^3\vec{x}\ \frac{e^{+i\gamma^0\vec{p}\cdot
  \vec{x}}}{\sqrt{(2\pi)^3}}\Psi(\vec{x})\Big)
\end{align*}
So, one gets:
\begin{align*}
\mathcal{F}_{M}\{\Psi\}=S \circ \mathcal{F}^\Theta_{P}\{\Psi\}
\end{align*}
Where $S:Pinor(\mathbb{R}^3)\to Pinor(\mathbb{R}^3)$ is a bijective linear map defined by:
\begin{align*}
\left[ \begin{array}{l}
S\{\Psi\}(+\vec{p})\\
S\{\Psi\}(-\vec{p})
\end{array}
\right]&\equiv
\left[ \begin{array}{cc}
\sqrt{\frac{E_p+m}{2E_p}} 
& -\sqrt{\frac{E_p-m}{2E_p}}\frac{\vec{p}\cdot\vec{\gamma}\gamma^0}{|\vec{p}|}\\
\sqrt{\frac{E_p-m}{2E_p}}\frac{\vec{p}\cdot\vec{\gamma}\gamma^0}{|\vec{p}|}
&\sqrt{\frac{E_p+m}{2E_p}}
\end{array}
\right]\ 
\left[ \begin{array}{l}
\Psi(+\vec{p})\\
\Psi(-\vec{p})
\end{array}
\right]
\end{align*}
We can check that the $2\times 2$ matrix appearing in the equation
above is orthogonal. Therefore $S$ is an unitary operator.
Since $\mathcal{F}^\Theta_{P}$ is also unitary,
$\mathcal{F}_{M}$ is unitary.
\end{proof}

\begin{prop}
The inverse Majorana-Fourier Transform verifies:
\begin{align*}
(\gamma^0\vec{\gamma}\cdot \vec{\partial}+i\gamma^0
m)\mathcal{F}_M^{-1}\{\Psi\}(\vec{x})&=(\mathcal{F}_M^{-1}\circ
R)\{\Psi\}(\vec{x})\\
\vec{\partial}_j\mathcal{F}_M^{-1}\{\Psi\}(\vec{x})&=(\mathcal{F}_M^{-1}\circ
R_j)\{\Psi\}(\vec{x})
\end{align*}
Where $\Psi\in Pinor(\mathbb{R}^3)$ and $R,R_j:Pinor(\mathbb{R}^3)\to Pinor(\mathbb{R}^3)$ are
 linear maps defined by $R\{\Psi\}(\vec{p})=i\gamma^0
 E_p\Psi(\vec{p})$ and 
$R_j\{\Psi\}(\vec{p})=i\gamma^0 \vec{p}_j\Psi(\vec{p})$ .
\end{prop}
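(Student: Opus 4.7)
My strategy is to first derive an explicit integral representation of $\mathcal{F}_M^{-1}$ and then differentiate under the integral sign. Using the factorization $\mathcal{F}_M = S\circ\mathcal{F}_P^\Theta$ established in the proof of the preceding proposition, together with the fact that the $2\times 2$ block matrix defining $S$ is real orthogonal, one obtains $S^{-1}\{\Psi\}(\vec{p})=a(\vec{p})\Psi(\vec{p})+b(\vec{p})\Psi(-\vec{p})$, where $a(\vec{p})=\sqrt{(E_p+m)/(2E_p)}$ and $b(\vec{p})=\sqrt{(E_p-m)/(2E_p)}\,\vec{p}\cdot\vec{\gamma}\gamma^0/|\vec{p}|$. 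Composing with the inverse Pauli--Fourier transform lifted through $\Theta$, splitting the integral into its $a$- and $b$-parts and changing variables $\vec{p}\to -\vec{p}$ in the $b$-integral (using $b(-\vec{p})=-b(\vec{p})$) yields
\begin{align*}
\mathcal{F}_M^{-1}\{\Psi\}(\vec{x})=\int d^3\vec{p}\,\frac{e^{i\gamma^0\vec{p}\cdot\vec{x}}a-e^{-i\gamma^0\vec{p}\cdot\vec{x}}b}{\sqrt{(2\pi)^3}}\,\Psi(\vec{p}).
\end{align*}

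The momentum identity is then immediate by differentiating under the integral: one has $\vec{\partial}_j e^{\pm i\gamma^0\vec{p}\cdot\vec{x}}=\pm i\gamma^0\vec{p}_j e^{\pm i\gamma^0\vec{p}\cdot\vec{x}}$, and the anticommutation $\{b,i\gamma^0\}=0$ (which holds because $b$ is a real scalar multiple of $\vec{p}\cdot\vec{\gamma}\gamma^0$ and $\vec{\gamma}\cdot\vec{p}$ anticommutes with $\gamma^0$) together with the fact that $a$ is scalar shows that the integrand becomes $[e^{i\gamma^0\vec{p}\cdot\vec{x}}a-e^{-i\gamma^0\vec{p}\cdot\vec{x}}b]\,i\gamma^0\vec{p}_j\,\Psi(\vec{p})$, which is precisely $\mathcal{F}_M^{-1}\{R_j\{\Psi\}\}(\vec{x})$.

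For the Hamiltonian-type identity I would use $\gamma^0\gamma^j i\gamma^0=-i\gamma^j$ together with $\{i\vec{\gamma}\cdot\vec{p},i\gamma^0\}=0$ to compute
\begin{align*}
(\gamma^0\vec{\gamma}\cdot\vec{\partial}+i\gamma^0 m)\,e^{\pm i\gamma^0\vec{p}\cdot\vec{x}}=e^{\pm i\gamma^0\vec{p}\cdot\vec{x}}\,i\gamma^0 m\,\mp\, e^{\mp i\gamma^0\vec{p}\cdot\vec{x}}\,i\vec{\gamma}\cdot\vec{p},
\end{align*}
and a parallel rearrangement of $\mathcal{F}_M^{-1}\{R\{\Psi\}\}$ in which $\{b,i\gamma^0\}=0$ is used to commute $i\gamma^0 E_p$ past $b$. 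Matching coefficients of $e^{\pm i\gamma^0\vec{p}\cdot\vec{x}}$ in the two resulting integrands reduces the identity to the scalar relations $m\sqrt{E_p+m}+|\vec{p}|\sqrt{E_p-m}=E_p\sqrt{E_p+m}$ and $|\vec{p}|\sqrt{E_p+m}=(E_p+m)\sqrt{E_p-m}$, both of which follow from $|\vec{p}|^2=E_p^2-m^2$ after squaring. The main obstacle is the bookkeeping of non-commutations: $b$ anticommutes with $i\gamma^0$ and $\vec{\gamma}\cdot\vec{p}$ anticommutes with $\gamma^0$, so one must track which side of each plane wave the various Dirac matrices live on; once everything is normalized so each exponential appears on a consistent side, the verification is purely algebraic.
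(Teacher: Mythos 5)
Your proposal is correct and rests on the same structural fact the paper uses, namely the factorization $\mathcal{F}_M^{-1}=(\mathcal{F}_P^\Theta)^{-1}\circ S^{-1}$ established in the proof of unitarity. The difference is one of organization rather than substance: the paper keeps the two factors separate, first showing that $(\mathcal{F}_P^\Theta)^{-1}$ intertwines $\gamma^0\vec{\gamma}\cdot\vec\partial+i\gamma^0 m$ (resp.\ $\vec\partial_j$) with an explicit block operator $Q$ (resp.\ $R_j$) on pairs $(\Psi(+\vec p),\Psi(-\vec p))$, and then verifies the block matrix identity $Q\circ S^{-1}=S^{-1}\circ R$ by a one-line $2\times 2$ multiplication; you instead compose the two factors into a single integral kernel $e^{i\gamma^0\vec p\cdot\vec x}a-e^{-i\gamma^0\vec p\cdot\vec x}b$, differentiate under the integral, and match coefficients of the two exponentials. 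Both reduce to the same two scalar identities (equivalent to $|\vec p|^2=E_p^2-m^2$) plus the same anticommutation bookkeeping between $i\gamma^0$, $\vec\gamma\cdot\vec p$, and $b$. The paper's formulation is slightly cleaner because the orthogonality of $S$ and the block conjugation $Q\circ S^{-1}=S^{-1}\circ R$ make the cancellation transparent without having to track exponentials; your formulation is more self-contained in that it produces an explicit inversion kernel, which can be reused elsewhere (e.g.\ for the transition operator $T(x)$). One small caveat worth making explicit in your write-up: the step $S^{-1}\{\Psi\}(\vec p)=a\Psi(\vec p)+b\Psi(-\vec p)$ uses that the $2\times2$ block matrix defining $S$ is orthogonal \emph{with symmetric blocks}, i.e.\ that $b^T=b$; this holds in a Majorana basis because $i\gamma^0$ is antisymmetric and $i\gamma^j$ is symmetric, so $(\vec p\cdot\vec\gamma\,\gamma^0)^T=\vec p\cdot\vec\gamma\,\gamma^0$, and it is what makes the signs in $S^{-1}$ come out as you stated.
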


\begin{proof}
We have $\mathcal{F}^{-1}_{M}=(\mathcal{F}^{\Theta}_{P})^{-1}\circ S^{-1}$. Then:
\begin{align*}
(\gamma^0\vec{\gamma}\cdot \vec{\partial}+i\gamma^0
m)(\mathcal{F}^\Theta_P)^{-1}\{\Psi\}(\vec{x})=((\mathcal{F}^\Theta_P)^{-1}\circ Q) \{\Psi\}(\vec{x})
\end{align*}
Where $Q:Pinor(\mathbb{R}^3)\to Pinor(\mathbb{R}^3)$ is a linear map defined by:
\begin{align*}
\left[ \begin{array}{l}
Q\{\Psi\}(+\vec{p})\\
Q\{\Psi\}(-\vec{p})
\end{array}
\right]&\equiv
\left[ \begin{array}{cc}
i\gamma^0 m 
& i\vec{p}\cdot \vec{\gamma}\\
-i\vec{p}\cdot \vec{\gamma}
& i\gamma^0 m
\end{array}
\right]\ 
\left[ \begin{array}{l}
\Psi(+\vec{p})\\
\Psi(-\vec{p})
\end{array}
\right]
\end{align*}
Now we show that $Q\circ S^{-1}=S^{-1}\circ R$:
\begin{align*}
&\left[ \begin{array}{cc}
i\gamma^0 m 
& i\vec{p}\cdot \vec{\gamma}\\
-i\vec{p}\cdot \vec{\gamma}
& i\gamma^0 m
\end{array}
\right]\ 
\left[ \begin{array}{cc}
\sqrt{\frac{E_p+m}{2E_p}} 
& \sqrt{\frac{E_p-m}{2E_p}}\frac{\vec{p}\cdot\vec{\gamma}\gamma^0}{|\vec{p}|}\\
-\sqrt{\frac{E_p-m}{2E_p}}\frac{\vec{p}\cdot\vec{\gamma}\gamma^0}{|\vec{p}|}
&\sqrt{\frac{E_p+m}{2E_p}}
\end{array}
\right]=\\
&=\left[ \begin{array}{cc}
\sqrt{\frac{E_p+m}{2E_p}} 
& \sqrt{\frac{E_p-m}{2E_p}}\frac{\vec{p}\cdot\vec{\gamma}\gamma^0}{|\vec{p}|}\\
-\sqrt{\frac{E_p-m}{2E_p}}\frac{\vec{p}\cdot\vec{\gamma}\gamma^0}{|\vec{p}|}
&\sqrt{\frac{E_p+m}{2E_p}}
\end{array}\right]\ 
 \left[ \begin{array}{cc}
 i\gamma^0 E_p & 0\\
 0 & i\gamma^0 E_p
 \end{array}
 \right]
\end{align*}

We also have that:
\begin{align*}
\vec{\partial}_j(\mathcal{F}^\Theta_P)^{-1}\{\Psi\}(\vec{x})=((\mathcal{F}^\Theta_P)^{-1}\circ R_j) \{\Psi\}(\vec{x})
\end{align*}
Where $R_j:Pinor(\mathbb{R}^3)\to Pinor(\mathbb{R}^3)$ is the linear map defined by:
\begin{align*}
\left[ \begin{array}{l}
R_j\{\Psi\}(+\vec{p})\\
R_j\{\Psi\}(-\vec{p})
\end{array}
\right]&\equiv
\left[ \begin{array}{cc}
i\gamma^0 \vec{p}_j 
& 0\\
0
& -i\gamma^0 \vec{p}_j 
\end{array}
\right]\ 
\left[ \begin{array}{l}
\Psi(+\vec{p})\\
\Psi(-\vec{p})
\end{array}
\right]
\end{align*}
It verifies $R_j\circ S^{-1}=S^{-1}\circ R_j$.
\end{proof}

\section{Angular momentum of Majorana spinor fields}
\begin{defn}
Let $\vec{x}\in \mathbb{R}^3$. The spherical coordinates
parametrization is:
\begin{align*}
\vec{x}=r(\sin(\theta)\sin(\varphi)\vec{e_1}+\sin(\theta)\sin(\varphi)\vec{e_2}+\cos(\theta)\vec{e}_3)
\end{align*}
where $\{\vec{e}_1,\vec{e}_2,\vec{e}_3\}$ is a fixed orthonormal basis of
$\mathbb{R}^3$ and $r\in [0,+\infty[$, $\theta \in [0,\pi]$, $\varphi
\in [-\pi,\pi]$.
\end{defn}

\begin{defn}
Let
\begin{align*}
\mathbb{S}^3\equiv \{(p,l,\mu):p\in \mathbb{R}_{\geq 0}; 
l,\mu \in \mathbb{Z}; l\geq 1; -l \leq \mu\leq l-1 \}
\end{align*}
The Hilbert space $L^2(\mathbb{S}^3)$ is the real Hilbert space of real
Lebesgue square integrable functions of $\mathbb{S}^3$. The internal product is:
\begin{align*}
<f,g>=\sum_{l=1}^{+\infty}\sum_{\mu=-l}^{l-1}\int_0^{+\infty} dp f(p,l,\mu)
g(p,l,\mu),\ f,g\in L^2(\mathbb{S}^3)
\end{align*}
\end{defn}

\begin{defn}
The Pauli-Hankel transform $\mathcal{H}_{P}: Pauli(\mathbb{R}^3)\to
Pauli(\mathbb{S}^3)$ is an operator defined by:
\begin{align*}
\mathcal{H}_{P}\{\psi\}(p,l,\mu)\equiv\int r^2 dr d(\cos\theta)d\varphi
\frac{2 p}{\sqrt{2\pi}}\lambda^\dagger_{l \mu}(pr,\theta,\varphi) \psi(r,\theta,\varphi),\ \psi\in Pauli(\mathbb{R}^3)
\end{align*}
The domain of the integral is $\mathbb{R}^3$. The matrices $\lambda_{l
  \mu}$, the spherical
Bessel function of the first kind $j_n$ \cite{bessel}, the  Pauli
spherical matrices $\omega_{l \mu}$\cite{harmonics}, 
the spherical harmonics $Y_{l\mu}$ and the associated Legendre
functions of the first kind $P_{l\mu}$ are:
\begin{align*}
\lambda_{l \mu}(r,\theta,\varphi)
\equiv& \omega_{l \mu}(\theta,\varphi)\Big(j_l(r)\frac{1+\sigma^3}{2}+j_{l-1}(r)\frac{1-\sigma^3}{2}\Big)\\
j_l(r)\equiv& r^l\Big(-\frac{1}{r}\frac{d}{dr}\Big)^l \frac{\sin
  r}{r}\\
\omega_{l \mu}(\theta,\varphi)
\equiv& \Big(-\sqrt{\frac{l-\mu}{2l+1}}
Y_{l,\mu}(\theta,\varphi)+\sqrt{\frac{l+\mu+1}{2l+1}}
Y_{l,\mu+1}(\theta,\varphi)\sigma^1\Big)\frac{1+\sigma^3}{2}\\
+&\Big(\sqrt{\frac{l+\mu}{2l-1}}
Y_{l-1,\mu}(\theta,\varphi)\sigma^1+\sqrt{\frac{l-\mu-1}{2l-1}}
Y_{l-1,\mu+1}(\theta,\varphi)\Big)\frac{1-\sigma^3}{2}\\
Y_{l\mu}(\theta,\varphi)\equiv&\sqrt{\frac{2l+1}{4\pi}\frac{(l-m)!}{(l+m)!}}
P_{l}^\mu(\cos\theta)e^{i\mu \varphi}\\
P_{l}^\mu(\xi)\equiv&\frac{(-1)^{\mu}}{2^{l}l!}(1-\xi^{2})^{\mu/2}
\frac{\mathrm{d}^{l+\mu}}{\mathrm{d}\xi^{l+\mu}}(\xi^{2}-1)^{l}
\end{align*}
\end{defn}

\begin{rem}
Due to the properties of spherical harmonics and Bessel functions, the
Pauli-Hankel transform is an unitary operator. The inverse Pauli-Hankel Transform verifies:
\begin{align*}
\vec{\sigma}\cdot \vec{\partial}\
\mathcal{H}_P^{-1}\{\psi\}(\vec{x})&=(\mathcal{H}_P^{-1}\circ
R)\{\psi\}(\vec{x})\\
(\frac{1}{2}\sigma^3-x^1i\partial_2+x^2i\partial_1)\ 
\mathcal{H}_P^{-1}\{\psi\}(\vec{x})&=(\mathcal{H}_P^{-1}\circ
R')\{\psi\}(\vec{x})
\end{align*}
Where  $\psi\in Pauli(\mathbb{S}^3)$  and $R,R':Pauli(\mathbb{S}^3)\to Pauli(\mathbb{S}^3)$ are
linear maps defined by:
\begin{align*}
R\{\psi\}(p,l,\mu)&\equiv p \sigma^1\sigma^3\psi(p,l,\mu)\\
R'\{\psi\}(p,l,\mu)&\equiv (\mu+\frac{1}{2})\psi(p,l,\mu)
\end{align*}
\end{rem}

\begin{defn}
The Majorana-Hankel transform $\mathcal{H}_{M}: Pinor(\mathbb{R}^3)\to
Pinor(\mathbb{S}^3)$ is an operator defined by:
\begin{align*}
\mathcal{H}_{M}\{\Psi\}(p,l,\mu)&\equiv\int r^2 dr
d(\cos\theta)d\varphi
\frac{2 p}{\sqrt{2\pi}}\Delta^\dagger(p,l,\mu,r,\theta,\varphi)\Psi(r,\theta,\varphi),\
\Psi\in Pinor(\mathbb{R}^3)
\end{align*}
\begin{align*}
\Delta(p,l,\mu,r,\theta,\varphi)\equiv \sqrt{\frac{E_p+m}{2E_p}}\Lambda_{l\mu}(pr,\theta,\varphi)
+\sqrt{\frac{E_p-m}{2E_p}}(-1)^\mu\Lambda_{l,-\mu-1}(pr,\theta,\varphi)i\gamma^3
\end{align*}
%\Big(\sqrt{\frac{E_p+m}{2E_p}}
%j_l(pr)+\sqrt{\frac{E_p-m}{2E_p}}j_{l-1}(pr)i\gamma^r\Big)
%\Omega_{l \mu}(\theta,\varphi)\frac{1+\gamma^3\gamma^5}{2}\\
%+&\Big(\sqrt{\frac{E_p+m}{2E_p}} j_{l-1}(pr)-\sqrt{\frac{E_p-m}{2E_p}}  j_{l}(pr)i\gamma^r\Big)\Omega_{l
%  \mu}(\theta,\varphi)\frac{1-\gamma^3\gamma^5}{2}
Where the matrices 
$\Lambda_{l\mu}(r,\theta,\varphi)\equiv \Theta\circ\lambda_{l\mu}(r,\theta,\varphi)\circ \Theta^{-1}$ are obtained from 
the Pauli matrices $\lambda_{l\mu}$ replacing $(i,\sigma^1,\sigma^3)$ by $(i\gamma^0,\gamma^1\gamma^5,\gamma^3\gamma^5)$.
\end{defn}

\begin{prop}
The Majorana-Hankel transform is an unitary operator.
\end{prop}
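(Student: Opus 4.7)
The plan is to mirror the argument used for the Majorana--Fourier transform: decompose $\mathcal{H}_M$ as a composition $S' \circ \mathcal{H}_P^\Theta$, where $\mathcal{H}_P^\Theta$ is already unitary by Proposition~\ref{prop:unitary} applied to the unitary Pauli--Hankel transform, and where $S':Pinor(\mathbb{S}^3)\to Pinor(\mathbb{S}^3)$ is a bijective linear operator whose unitarity reduces to an algebraic check on a single $2\times 2$ block.

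First I would split the kernel $\Delta^\dagger$ into its two summands (using that $i\gamma^3$ is real and symmetric in a Majorana basis, hence Hermitian, and satisfies $(i\gamma^3)^2=1$) and recognise the resulting integrals as values of $\mathcal{H}_P^\Theta\{\Psi\}$ at the indices $(p,l,\mu)$ and $(p,l,-\mu-1)$, the latter left-multiplied by $i\gamma^3$ and carrying the sign $(-1)^\mu$. Setting $a=\sqrt{(E_p+m)/(2E_p)}$ and $b=\sqrt{(E_p-m)/(2E_p)}$, this should yield
\begin{align*}
\mathcal{H}_M\{\Psi\}(p,l,\mu)
= a\,\mathcal{H}_P^\Theta\{\Psi\}(p,l,\mu) + b(-1)^\mu\, i\gamma^3\,\mathcal{H}_P^\Theta\{\Psi\}(p,l,-\mu-1),
\end{align*}
which exhibits $\mathcal{H}_M = S'\circ \mathcal{H}_P^\Theta$ if we define $S'$ by the same right-hand side with a general $\Phi\in Pinor(\mathbb{S}^3)$ replacing $\mathcal{H}_P^\Theta\{\Psi\}$.

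Next, I would observe that the involution $\mu\mapsto -\mu-1$ on $\{-l,\dots,l-1\}$ has no fixed points, so $S'$ acts block-diagonally on the disjoint pairs $\{(p,l,\mu),(p,l,-\mu-1)\}$ by the block
\begin{align*}
M_\mu = \begin{pmatrix} a & b(-1)^\mu\, i\gamma^3 \\ -b(-1)^\mu\, i\gamma^3 & a \end{pmatrix}.
\end{align*}
Using $(i\gamma^3)^T = i\gamma^3$, $(i\gamma^3)^2=1$ and $a^2+b^2=1$, a direct computation gives $M_\mu^T M_\mu = I$, the off-diagonal cross terms cancelling thanks to the opposite signs in front of $b(-1)^\mu\, i\gamma^3$. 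Hence $S'$ is bijective with blockwise inverse $M_\mu^T$ and preserves the real inner product on $Pinor(\mathbb{S}^3)$, so it is unitary. Combined with the unitarity of $\mathcal{H}_P^\Theta$, the composition $\mathcal{H}_M = S'\circ \mathcal{H}_P^\Theta$ is unitary.

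The main obstacle is the bookkeeping underlying the decomposition of $\Delta^\dagger$: one must verify carefully that pushing $\Delta^\dagger$ inside the integral really produces $\mathcal{H}_P^\Theta\{\Psi\}$ at the two indices $(l,\mu)$ and $(l,-\mu-1)$, with precisely the sign $(-1)^\mu$ and the factor $i\gamma^3$ on the correct side. This uses the compatibility of $\Theta$ with the replacement $(i,\sigma^1,\sigma^3)\leftrightarrow (i\gamma^0,\gamma^1\gamma^5,\gamma^3\gamma^5)$ built into the definition of $\Lambda_{l\mu}$, together with the identities relating the spherical harmonics and spherical Bessel factors appearing in $\omega_{l\mu}$ at $(l,\mu)$ and $(l,-\mu-1)$.
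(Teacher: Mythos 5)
Your proposal is correct and follows essentially the same route as the paper: the paper likewise writes $\mathcal{H}_M = S\circ \mathcal{H}_P^\Theta$ with $S$ acting blockwise on the pairs $(p,l,\mu)$, $(p,l,-\mu-1)$ via exactly the matrix $M_\mu$ you wrote down, and concludes from the orthogonality of that block together with Proposition~\ref{prop:unitary}. Your additional remarks (that $\mu\mapsto-\mu-1$ is fixed-point free on $\{-l,\dots,l-1\}$, and the explicit computation $M_\mu^T M_\mu=I$ using $(i\gamma^3)^T=i\gamma^3$ and $(i\gamma^3)^2=1$) merely spell out what the paper asserts with ``we can check''.
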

\begin{proof}
%Using the property  $\sigma^r\omega_{l \mu}=-\omega_{l
%  \mu}\sigma^1$ \cite{harmonics}, we get that $i\gamma^r\Omega_{l
%  \mu}=(-1)^\mu\Omega_{l,-\mu-1}i\gamma^5$.

The Majorana-Hankel transform can be written as:
\begin{align*}
\mathcal{H}_{M}=S \circ \mathcal{H}^\Theta_{P}
\end{align*}
Where $S:Pinor(\mathbb{S}^3)\to Pinor(\mathbb{S}^3)$ is a bijective linear map defined by:
\begin{align*}
\left[ \begin{array}{l}
S\{\Psi\}(p,l,\mu)\\
S\{\Psi\}(p,l,-\mu-1)
\end{array}
\right]&\equiv
\left[ \begin{array}{cc}
\sqrt{\frac{E_p+m}{2E_p}} 
& \sqrt{\frac{E_p-m}{2E_p}}(-1)^\mu i\gamma^3\\
-\sqrt{\frac{E_p-m}{2E_p}}(-1)^\mu i\gamma^3 
&\sqrt{\frac{E_p+m}{2E_p}}
\end{array}
\right]\ 
\left[ \begin{array}{l}
\Psi(p,l,\mu)\\
\Psi(p,l,-\mu-1)
\end{array}
\right]
\end{align*}
We can check that the $2\times 2$ matrix appearing in the equation
above is orthogonal. Therefore $S$ is an unitary operator.
Since $\mathcal{H}^\Theta_{P}$ is also unitary,
$\mathcal{H}_{M}$ is unitary. 
\end{proof}

\begin{prop}
The inverse Majorana-Hankel Transform verifies:
\begin{align*}
(\gamma^0\vec{\gamma}\cdot \vec{\partial}+i\gamma^0 m)
\mathcal{H}_M^{-1}\{\Psi\}(\vec{x})&=(\mathcal{H}_M^{-1}\circ
R)\{\Psi\}(\vec{x})\\
(\frac{1}{2}i\gamma^0\gamma^3\gamma^5+x^1\partial_2-x^2\partial_1)\ 
\mathcal{H}_M^{-1}\{\Psi\}(\vec{x})&=(\mathcal{H}_M^{-1}\circ
R')\{\Psi\}(\vec{x})
\end{align*}
Where  $\Psi\in Pinor(\mathbb{S}^3)$  and $R,R':Pinor(\mathbb{S}^3)\to Pinor(\mathbb{S}^3)$ are
linear maps defined by:
\begin{align*}
R\{\Psi\}(p,l,\mu)&\equiv i\gamma^0 E_p \Psi(p,l,\mu)\\
R'\{\Psi\}(p,l,\mu)&\equiv i\gamma^0 (\mu+\frac{1}{2})\Psi(p,l,\mu)
\end{align*}
\end{prop}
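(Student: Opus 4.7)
The plan is to mirror the proof of the analogous proposition for $\mathcal{F}_M^{-1}$. From the previous proposition we have $\mathcal{H}_M = S \circ \mathcal{H}_P^\Theta$, hence $\mathcal{H}_M^{-1} = (\mathcal{H}_P^\Theta)^{-1} \circ S^{-1}$. For each identity it then suffices to (i) compute an operator $Q$ on $Pinor(\mathbb{S}^3)$ such that $D\,(\mathcal{H}_P^\Theta)^{-1}\{\Psi\} = ((\mathcal{H}_P^\Theta)^{-1}\circ Q)\{\Psi\}$, where $D$ is the differential operator on the left-hand side of the claimed identity, and then (ii) verify $Q\circ S^{-1} = S^{-1}\circ R$ (respectively with $R'$) by a $2\times 2$ block calculation on the pair $\big(\Psi(p,l,\mu),\Psi(p,l,-\mu-1)\big)$, parallel to the check carried out in the Majorana-Fourier proof.

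For the angular-momentum identity, step (i) is short: apply $\Theta$ to the Pauli-Hankel remark $(\tfrac{1}{2}\sigma^3 - x^1 i\partial_2 + x^2 i\partial_1)\mathcal{H}_P^{-1}\{\psi\} = \mathcal{H}_P^{-1}\{(\mu+\tfrac{1}{2})\psi\}$ (so that $i\mapsto i\gamma^0$, $\sigma^3 \mapsto \gamma^3\gamma^5$), then pre-multiply by $i\gamma^0$ and use $(i\gamma^0)^2 = -1$ to turn $\mp x^j i\gamma^0 \partial_k$ into $\pm x^j \partial_k$. This gives $Q\{\Psi\}(p,l,\mu) = i\gamma^0(\mu+\tfrac{1}{2})\Psi(p,l,\mu)$, which is exactly $R'$. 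Step (ii) then reduces to observing that $R'$ commutes with $S$: the swap $\mu\to -\mu-1$ flips the sign of $\mu+\tfrac{1}{2}$, and $i\gamma^0$ anti-commutes with the $i\gamma^3$ appearing in the off-diagonal entries of $S$, so the two sign changes cancel.

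The Dirac-operator identity is the main obstacle. The Majorana operator $\gamma^0 \vec{\gamma}\cdot\vec{\partial}+i\gamma^0 m$ is \emph{not} the $\Theta$ image of any single Pauli derivative operator, so the Pauli-Hankel identity $\vec\sigma\cdot\vec\partial\,\mathcal{H}_P^{-1}\{\psi\} = \mathcal{H}_P^{-1}\{p\sigma^1\sigma^3\psi\}$ does not transfer directly. I would work at the level of the integrand, using the Dirac-algebra identity $\gamma^0\gamma^j\cdot i\gamma^0 = -i\gamma^j$ (already used in the Majorana-Fourier proof) together with the explicit structure of $\Lambda_{l\mu}$, to show that acting with $\gamma^0\vec{\gamma}\cdot\vec{\partial}$ on $\Lambda_{l\mu}(pr,\theta,\varphi)$ produces a multiple of $p\,\Lambda_{l,-\mu-1}$ times a specific gamma factor; the multiplication of $\Lambda_{l,-\mu-1}$ by $i\gamma^3$ that appears in $\Delta$ and in $S$ makes the result fit into a $2\times 2$ block $Q$ with $i\gamma^0 m$ on the diagonal and an off-diagonal piece mixing $(p,l,\mu)\leftrightarrow (p,l,-\mu-1)$, structurally identical to the $Q$ of the Fourier case (whose off-diagonal was $\pm i\vec{p}\cdot\vec{\gamma}$).

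Step (ii) is then a routine $2\times 2$ matrix calculation that exactly mimics the verification performed in the Majorana-Fourier proof: expand $Q\circ S^{-1}$ and $S^{-1}\circ R$ and compare block by block, using $(i\gamma^3)^2=1$, the anti-commutation of $i\gamma^3$ with $i\gamma^0$, and the normalization $a^2+b^2=1$ with $a=\sqrt{(E_p+m)/(2E_p)}$, $b=\sqrt{(E_p-m)/(2E_p)}$. The technical work is thus concentrated in step (i) for the Dirac operator; once $Q$ is identified, the remaining matrix algebra is routine.
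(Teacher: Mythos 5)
Your overall strategy is exactly the paper's: write $\mathcal{H}_M^{-1} = (\mathcal{H}_P^\Theta)^{-1}\circ S^{-1}$, find an intermediate operator $Q$ with $D\,(\mathcal{H}_P^\Theta)^{-1} = (\mathcal{H}_P^\Theta)^{-1}\circ Q$, and then verify $Q\circ S^{-1} = S^{-1}\circ R$ by a $2\times 2$ block calculation. Your treatment of the angular--momentum identity is correct and complete: conjugating the Pauli-Hankel remark by $\Theta$ and pre-multiplying by $i\gamma^0$ does give $Q=R'$ pointwise, and the commutation $R'\circ S^{-1}=S^{-1}\circ R'$ follows from the cancelling sign flips of $\mu+\tfrac12$ under $\mu\mapsto -\mu-1$ and of $i\gamma^0$ past the off-diagonal $i\gamma^3$ in $S$, exactly as the paper records.

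For the Dirac--operator identity, however, you correctly locate the difficulty but do not resolve it. Saying that $\gamma^0\vec{\gamma}\cdot\vec{\partial}$ acting on $\Lambda_{l\mu}$ ``produces a multiple of $p\,\Lambda_{l,-\mu-1}$ times a specific gamma factor'' is the conclusion one needs, not an argument for it. The paper's proof hinges on the concrete intertwining identity
\begin{equation*}
i\gamma^5\,\Lambda_{l\mu}(pr,\theta,\varphi) \;=\; -(-1)^\mu\,\Lambda_{l,-\mu-1}(pr,\theta,\varphi)\,i\gamma^1,
\end{equation*}
which is what allows the factor $\gamma^0\gamma^5 = -i\gamma^0\,i\gamma^5$ (arising when one rewrites $\gamma^0\vec\gamma\cdot\vec\partial$ in terms of $\Theta(\vec\sigma\cdot\vec\partial)$) to be pushed through the Hankel kernel at the cost of a swap $\mu\leftrightarrow -\mu-1$ and a right factor of $i\gamma^1$. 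That is precisely what produces the paper's $Q$ with off-diagonal blocks $\pm(-1)^\mu\gamma^0\gamma^3\,p$, and without it neither the form of $Q$ nor the subsequent block computation can be carried out. So: same approach, correct second half, but the central technical step for the first displayed identity is only asserted, not derived; you would need to state and check the $\Lambda$-intertwining identity (or an equivalent) to close the proof.
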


\begin{proof}
We have $\mathcal{H}^{-1}_{M}=(\mathcal{H}^{\Theta}_{P})^{-1}\circ
S^{-1}$. Then we can check that
$i\gamma^5\Lambda_{l\mu}(pr,\theta,\varphi)=-(-1)^\mu\Lambda_{l,-\mu-1}(pr,\theta,\varphi)i\gamma^1$.

Therefore, the inverse Pauli-Hankel Transform verifies:
\begin{align*}
(\gamma^0\vec{\gamma}\cdot \vec{\partial}+i\gamma^0 m)\ (\mathcal{H}^\Theta_P)^{-1}\{\Psi\}(\vec{x})=((\mathcal{H}^\Theta_P)^{-1}\circ Q)\{\psi\}(\vec{x})
\end{align*}
Where  $\Psi\in Pinor(\mathbb{S}^3)$  and $Q:Pinor(\mathbb{S}^3)\to Pinor(\mathbb{S}^3)$ is a
linear map defined by:
\begin{align*}
\left[ \begin{array}{l}
Q\{\Psi\}(p,l,\mu)\\
Q\{\Psi\}(p,l,-\mu-1)
\end{array}
\right]&\equiv
\left[ \begin{array}{cc}
i\gamma^0 m
& (-1)^\mu \gamma^0\gamma^3 p\\
-(-1)^\mu \gamma^0\gamma^3 p 
&i\gamma^0 m
\end{array}
\right]\ 
\left[ \begin{array}{l}
\Psi(p,l,\mu)\\
\Psi(p,l,-\mu-1)
\end{array}
\right]
\end{align*}
Now we show that $Q\circ S^{-1}=S^{-1}\circ R$:
\begin{align*}
&\left[ \begin{array}{cc}
i\gamma^0 m
& (-1)^\mu \gamma^0\gamma^3 p\\
-(-1)^\mu \gamma^0\gamma^3 p 
&i\gamma^0 m
\end{array}
\right]\ 
\left[ \begin{array}{cc}
\sqrt{\frac{E_p+m}{2E_p}} 
& \sqrt{\frac{E_p-m}{2E_p}}(-1)^\mu i\gamma^3\\
-\sqrt{\frac{E_p-m}{2E_p}}(-1)^\mu i\gamma^3 
&\sqrt{\frac{E_p+m}{2E_p}}
\end{array}
\right]=\\
&=\left[ \begin{array}{cc}
\sqrt{\frac{E_p+m}{2E_p}} 
& \sqrt{\frac{E_p-m}{2E_p}}(-1)^\mu i\gamma^3\\
-\sqrt{\frac{E_p-m}{2E_p}}(-1)^\mu i\gamma^3 
&\sqrt{\frac{E_p+m}{2E_p}}
\end{array}\right]\ 
 \left[ \begin{array}{cc}
 i\gamma^0 E_p & 0\\
 0 & i\gamma^0 E_p
 \end{array}
 \right]
\end{align*}

The inverse Pauli-Hankel Transform also verifies:
\begin{align*}
(\frac{1}{2}i\gamma^0\gamma^3\gamma^5+x^1\partial_2-x^2\partial_1)\ (\mathcal{H}^\Theta_P)^{-1}\{\Psi\}(\vec{x})=((\mathcal{H}^\Theta_P)^{-1}\circ Q')\{\psi\}(\vec{x})
\end{align*}
Where  $\Psi\in Pinor(\mathbb{S}^3)$  and $R':Pinor(\mathbb{S}^3)\to Pinor(\mathbb{S}^3)$ is the
 linear map defined by:
\begin{align*}
\left[ \begin{array}{l}
R'\{\Psi\}(p,l,\mu)\\
R'\{\Psi\}(p,l,-\mu-1)
\end{array}
\right]&\equiv
\left[ \begin{array}{cc}
i\gamma^0(\mu+\frac{1}{2})
& 0\\
0
&-i\gamma^0(\mu+\frac{1}{2})
\end{array}
\right]\ 
\left[ \begin{array}{l}
\Psi(p,l,\mu)\\
\Psi(p,l,-\mu-1)
\end{array}
\right]
\end{align*}
It verifies $R'\circ S^{-1}=S^{-1}\circ R'$.
\end{proof}

\section{Majorana spinor field representation of the Poincare group}

Consider a Majorana spinor field $\Psi\in Pinor(\mathbb{R}^3)$.
Let the Dirac Hamiltonian, $H$, be defined in the configuration space by:
\begin{align*}
iH\{\Psi\}(\vec{x})\equiv (\gamma^0\vec{\gamma}\cdot \vec{\partial}+i\gamma^0
m)\Psi(\vec{x}),\ m\geq 0
\end{align*}
In the momentum space:
\begin{align*}
iH\{\Psi\}(\vec{p})\equiv i\gamma^0E_p \Psi(\vec{p})
\end{align*}
The free Dirac equation is verified by:
\begin{align*}
(\partial_0+iH) e^{-iH x^0}\{\Psi\}=0
\end{align*}

\begin{defn}
Given a Majorana spinor field 
$\Psi\in Pinor(\mathbb{R}^3)$, we define $\Psi(x)\equiv e^{-iHx^0}\{\Psi\}(\vec{x})$.
The Majorana spinor field projective representation of the Poincare
group is defined, up to a sign, as:
\begin{align*}
P(\Lambda_S,b)\{\Psi\}(x)\equiv \pm S\Psi(\Lambda^{-1}_S x+b)
\end{align*}
Where $\Lambda_S\in O(1,3)$, $S\in Pin(3,1)$ is such that 
$\Lambda^\mu_{S\ \nu}\gamma^\nu=S\gamma^\mu S^{-1}$ and $b\in\mathbb{R}^4$.
\end{defn}

\begin{prop}
The Majorana spinor field representation of the inhomogeneous
restricted Lorentz group, for a finite mass, is irreducible and unitary.
\end{prop}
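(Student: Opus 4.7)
The plan is to pass to momentum space via the Majorana-Fourier transform $\mathcal{F}_M$, which the previous propositions established is a unitary isomorphism on $Pinor(\mathbb{R}^3)$ diagonalizing both the Hamiltonian $iH$ (as multiplication by $i\gamma^0 E_p$) and each spatial momentum $\partial_j$ (as multiplication by $i\gamma^0\vec{p}_j$). In this representation, the spacetime-translation subgroup acts by multiplication by $e^{-i\gamma^0(E_p b^0-\vec p\cdot\vec b)}$. Since $i\gamma^0$ is, in any Majorana basis, a real antisymmetric matrix with square $-1$, these multipliers are real orthogonal, so the translation subgroup is unitarily represented.

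For the restricted Lorentz subgroup $SO^+(1,3)$, I would proceed via Wigner's little-group construction, transported from the complex Pauli picture to the real Majorana picture through the intertwiner $\Theta$ of Definition \ref{defn:Theta} and the equivalence of Proposition \ref{prop:unitary}. The factorization $\mathcal{F}_M=S\circ\mathcal{F}^\Theta_P$ exhibits $S$ as a Foldy-Wouthuysen-type intertwiner relating the Majorana-Fourier representation to the Newton-Wigner representation of the Pauli field, on which the classical Poincar\'e action is unitary with measure $d^3p$. Applying $\Theta$ and invoking Proposition \ref{prop:unitary} then transfers this unitarity to the Majorana side, while the fiberwise action of $Spin^+(3,1)$ and the change of variables on the mass hyperboloid reproduce the standard Wigner boost rule.

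For irreducibility, suppose $W\subseteq Pinor(\mathbb{R}^3)$ is a closed, non-zero, real-linear subspace invariant under the restricted Poincar\'e representation, and let $\widetilde W\equiv\mathcal{F}_M(W)$. Translation invariance combined with the complex structure $J\equiv i\gamma^0$ forces $\widetilde W$ to decompose as a direct integral $\int^{\oplus}_{\mathbb{R}^3}W(\vec p)\,d^3p$ of real, $J$-stable subspaces $W(\vec p)\subseteq Pinor$, by the standard spectral argument applied to the maximal abelian family of multipliers $\{e^{J\vec p\cdot\vec b}\}_{\vec b\in\mathbb{R}^3}$. The stabilizer of the rest-frame momentum $(m,\vec 0)$ inside $SO^+(1,3)$ is the rotation subgroup, whose action on the fiber $W(\vec 0)$ is the Majorana $SU(2)$ representation, real-irreducible by Proposition \ref{prop:irreducible}. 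Hence $W(\vec 0)\in\{0,Pinor\}$, and transitivity of $SO^+(1,3)$ on the hyperboloid $\{p:p^\mu p_\mu=m^2,\,p^0>0\}$, which requires a strictly positive finite mass, propagates this dichotomy to every $\vec p$, forcing $W\in\{0,Pinor(\mathbb{R}^3)\}$.

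The main obstacle will be the boost step of unitarity, since a boost mixes $x^0$ with $\vec x$ and the configuration-space measure $d^3x$ is not manifestly Lorentz invariant. The resolution is the intertwiner $S$ of the Majorana-Fourier proposition: its block form with entries $\sqrt{(E_p\pm m)/(2E_p)}$ encodes the on-shell spinor normalization and absorbs the Jacobian between $d^3p$ and the invariant measure $d^3p/E_p$. The hypothesis $m>0$ enters twice, both in ensuring that $S$ is non-singular and in guaranteeing that the little group is the compact $SU(2)$ of Proposition \ref{prop:irreducible} rather than the noncompact little group appearing in the massless case, for which the present argument would not yield real-irreducibility at spin $1/2$.
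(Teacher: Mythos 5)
Your proposal follows essentially the same route as the paper: unitarity comes from the momentum-space Wigner factorization $S=B_pRB_q^{-1}$ with the Jacobian $\sqrt{(\Lambda^{-1})^0(p)/E_p}$ absorbed into the on-shell normalization, and irreducibility from translations localizing in $\vec p$, the real-irreducibility of the $SU(2)$ little-group representation (Proposition \ref{prop:irreducible}), and transitivity on the mass shell for $m>0$. The only difference is presentational: you phrase irreducibility via invariant subspaces and a direct-integral decomposition, whereas the paper uses the equivalent matrix-element criterion that $\langle\Phi,S_g\circ T(a)\{\Psi\}\rangle=0$ for all $g,a$ forces $\Phi$ or $\Psi$ to vanish.
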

\begin{proof}
Suppose that we have for some $\Phi$ and $\Psi$, that for all $a\in \mathbb{R}^4$:
\begin{align*}
<\Phi, P(1,a)\{\Psi\}>=0 
\end{align*}
Doing a Fourier transform, the above equation can be written as:
\begin{align*}
\int \frac{d^3\vec{p}}{(2\pi)^3}\ \Phi^{\dagger}(\vec{p})e^{-i\gamma^0
  p \cdot a}\Psi(\vec{p})=0
\end{align*}

\begin{align*}
\int \frac{d^3\vec{p}}{(2\pi)^3}\
\Phi^{\dagger}(\vec{p})\Big(\frac{1+\gamma^0}{2}e^{-i p \cdot
  a}+\frac{1-\gamma^0}{2}e^{i p \cdot
  a})\Psi(\vec{p})=0
\end{align*}

Now we multiply it by $e^{-i \vec{q}\cdot \vec{a}}$, with $\vec{q}$ arbitrary. Integrating in $\vec{a}$, we get:

\begin{align*}
\Phi^{\dagger}(\vec{q})\frac{1+\gamma^0}{2}\Psi(\vec{q})e^{-i E_q a^0}+\Phi^{\dagger}(-\vec{q})\frac{1-\gamma^0}{2}\Psi(-\vec{q})e^{i E_q a^0}=0
\end{align*}
If we multiply the equation above by $e^{iE_q a^0}$ and we integrate
$a^0$ from $0$ to $2\pi/E_q$, we get $\Phi^{\dagger}(\vec{q})\frac{1+\gamma^0}{2}\Psi(\vec{q})=0$. Considering
real and imaginary parts in separate, we obtain 
$\Phi^{\dagger}(\vec{q})\Psi(\vec{q})=0$ and
$\Phi^{\dagger}(\vec{q})i\gamma^0\Psi(\vec{q})=0$.

Suppose $S\in Spin^+(3,1)$ verifies $S\slashed q=\slashed p S$. 
Then it can be written as:
$S=B_p R B^{-1}_q$, where $R \slashed l=\slashed l R$ and $B_p$ is any
Lorentz transform verifying $B_p\slashed l=\slashed p B_p$.
Now suppose $i\slashed l=i\gamma^0 m$, with $m>0$. Then 
$B_p\equiv \frac{\slashed p
\gamma^0+m}{\sqrt{E_p+m}\sqrt{2m}}$, where $p^0=E_p$, satisfies $B_p\slashed l=\slashed
p B_p$. 
Then $R$ is a representation of $SU(2)$ and:
\begin{align*}
&S\{\Psi\}(x)=\int \frac{d^3\vec{p}}{\sqrt{(2\pi)^3}}S\frac{\slashed p
  \gamma^0+m}{\sqrt{E_p+m}\sqrt{2E_p}}e^{-i\gamma^0 \Lambda(p) \cdot
x}\Psi(\vec{p})\\
&=\int \frac{d^3\vec{p}}{\sqrt{(2\pi)^3}}\frac{\slashed \Lambda (p)
  \gamma^0+m}{\sqrt{\Lambda^0(p)+m}\sqrt{2\Lambda^0(p)}}e^{-i\gamma^0 \Lambda(p) \cdot
x}R\sqrt{\frac{\Lambda^0(p)}{E_p}}\Psi(\vec{p})\\
&=\int \frac{d^3\vec{p}}{\sqrt{(2\pi)^3}}\frac{(\Lambda^{-1})^0(p)}{E_p}\frac{\slashed p
  \gamma^0+m}{\sqrt{E_p+m}\sqrt{2E_p}}e^{-i\gamma^0 p \cdot
x}R\sqrt{\frac{E_p}{(\Lambda^{-1})^0(p)}}\Psi(\vec{\Lambda}^{-1}(p))
\end{align*}
Then:
\begin{align*}
\mathcal{F}_M\circ S\{\Psi\}(x^0,\vec{p})=e^{-i\gamma^0 E_p x^0}R\sqrt{\frac{(\Lambda^{-1})^0(p)}{E_p}}\Psi(\vec{\Lambda}^{-1}(p))
\end{align*}

Hence the Poincare representation is unitary.
Since $m>0$, for all $\vec{q}$ and $\vec{p}$, we can always find $\Lambda$ such
that $\vec{q}=\vec{\Lambda}(p)$. If the Poincare representation is reducible, since it is unitary,
there are 2 states $\Psi,\Phi$ verifying for all $g\in SL(2,\mathbb{C})$ and
$a\in \mathbb{R}^4$:
\begin{align*}
<\Phi,S_g\circ T(a)\{\Psi\}>=0 
\end{align*}

This implies that for all $\vec{p}$ and $\vec{q}$:
\begin{align*}
\frac{m}{E_p}\Phi^\dagger(\vec{q})R\Psi(\vec{p})=0
\end{align*}
$R$ is a Majorana representation of $SU(2)$, which from Proposition
\ref{prop:irreducible} is irreducible,
so the equation above is not true. Therefore the Poincare representation is irreducible and unitary.
\end{proof}

The translations in space-time are given by $P(1,b)$. Doing a Fourier-Majorana
transform, we get: $P(1,b)\{\Psi\}(x^0,\vec{p})\equiv e^{-i\gamma^0
  p\cdot b}\Psi(x^0,\vec{p})$, with $p^2=m^2$. 
Therefore, $p$ is related with the 4-momentum of the Poincare
representation.

The rotations are defined by $P(R,0)$, where $R\in SU(2)$. Doing a Hankel-Majorana
transform, we get for a rotation along $z$ by an angle $\theta$: 
\begin{align*}
P(R,0)\{\Psi\}(x^0,p,l,\mu)\equiv
e^{i\gamma^0(\mu+\frac{1}{2})\theta}\Psi(x^0,p,l,\mu)
\end{align*}
  
Therefore, $\mu$ is related with the angular momentum of a spin
one-half Poincare representation.

Additionally, the transition operator $T$ defined by:
\begin{align*}
\Psi(x)&=\int d^3\vec{y} T(x-y)\Psi(y)
\end{align*}
It is given by:
\begin{align*}
T(x)= \int \frac{d^3\vec{p}}{(2\pi)^3}\frac{\slashed p
  \gamma^0+m}{\sqrt{E_p+m}\sqrt{2E_p}}e^{-i\gamma^0 p \cdot
x}\frac{\slashed p\gamma^0+m}{\sqrt{E_p+m}\sqrt{2E_p}}
\end{align*}
When $x^0=0$ and $\vec{x}\neq 0$, $T(x)=0$. Doing a
Lorentz transformation we get that when $x^2<0$, $T(x)=0$ and
therefore the transition operator respects relativistic causality in
the sense that it is null outside the light cone.

\section{Energy of Majorana spinor fields}

\begin{defn}
The Energy Transform $\mathcal{E}: Pinor(\mathbb{R})\to
Pinor(\mathbb{R})$ is an operator defined by:
\begin{align*}
\mathcal{E}\{\Psi\}(p^0)&\equiv \int dx^0\ \frac{e^{i\gamma^0p^0x^0}}{\sqrt{2\pi}}\Psi(x^0),\ \Psi\in Pinor(\mathbb{R})
\end{align*}
Where the domain of the integral is $\mathbb{R}$, $m\geq 0$.
\end{defn}

\begin{prop}
The Energy transform is an unitary operator.
\end{prop}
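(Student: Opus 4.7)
The plan is to reduce the statement to Proposition \ref{prop:unitary} by recognizing $\mathcal{E}$ as the $\Theta$-conjugate of a one-variable Pauli-Fourier transform. Unlike the Majorana-Fourier Transform, the Energy Transform contains no mass-shell rotation factor $(\slashed{p}\gamma^0+m)/(\sqrt{E_p+m}\sqrt{2E_p})$, so no auxiliary orthogonal block $S$ is needed; the whole content of the statement lies in the fact that $i\gamma^0$ behaves on $Pinor$ exactly as the scalar $i$ behaves on $Pauli$ once we pass through the map $\Theta$.

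First I would verify, directly from Definition \ref{defn:Theta}, that the $\mathbb{R}$-linear bijection $\Theta=\Theta_{L^2(\mathbb{R})}$ intertwines scalar multiplication by $i$ on $Pauli\otimes_{\mathbb{R}}L^2(\mathbb{R})$ with left-multiplication by $i\gamma^0$ on $Pinor\otimes_{\mathbb{R}}L^2(\mathbb{R})$. This reduces to a check on the four basis vectors $P_\pm,iP_\pm$: the pair $\{M_+,i\gamma^0 M_+\}$ is sent by multiplication by $i\gamma^0$ to $\{i\gamma^0 M_+,-M_+\}$ (using $(i\gamma^0)^2=-1$), matching the action of $i$ on $\{P_+,iP_+\}$, and analogously for the minus components. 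By $\mathbb{R}$-linearity and the usual continuous functional calculus, this implies $(e^{i\alpha}\,\cdot)^{\Theta}=e^{i\gamma^0\alpha}\,\cdot$ for every real $\alpha$.

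Next I would apply this intertwiner inside the defining integral. The one-dimensional inverse Pauli-Fourier transform on $Pauli(\mathbb{R})$ is the unitary operator $\mathcal{F}_P^{-1}\{\psi\}(p^0)=\int \frac{dx^0}{\sqrt{2\pi}}\,e^{ip^0 x^0}\psi(x^0)$. Its $\Theta$-conjugate is
\begin{align*}
(\mathcal{F}_P^{-1})^{\Theta}\{\Psi\}(p^0)=\int \frac{dx^0}{\sqrt{2\pi}}\,e^{i\gamma^0 p^0 x^0}\Psi(x^0),
\end{align*}
which is exactly the definition of $\mathcal{E}$. Since $\mathcal{F}_P^{-1}$ is unitary on $Pauli(\mathbb{R})=Pauli\otimes_{\mathbb{R}}L^2(\mathbb{R})$ by the standard Plancherel theorem, Proposition \ref{prop:unitary} applied with $H_1=H_2=L^2(\mathbb{R})$ gives that $\mathcal{E}=(\mathcal{F}_P^{-1})^{\Theta}$ is unitary.

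The only real subtlety, and what I regard as the main obstacle, is the intertwining of the complex structure across $\Theta$: one must be sure that the exponential $e^{i\gamma^0 p^0 x^0}$ appearing in the definition of $\mathcal{E}$, which a priori is just the power series of a real-linear endomorphism of $Pinor$, is genuinely the $\Theta$-image of ordinary multiplication by $e^{ip^0 x^0}$ on $Pauli$. Once that intertwining is established, the unitarity follows with no further analytic work beyond what is already packaged into the Pauli case.
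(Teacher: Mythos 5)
Your proposal is correct and follows essentially the same route as the paper, which also writes $\mathcal{E}=\Theta_{L^2}\circ\mathcal{F}_P(-p^0)\circ\Theta_{L^2}^{-1}$ (the Pauli--Fourier transform at $-p^0$ being your inverse transform) and invokes the unitarity of the Pauli--Fourier transform together with Proposition~\ref{prop:unitary}. Your explicit check that $\Theta$ intertwines multiplication by $i$ with multiplication by $i\gamma^0$, and hence $e^{ip^0x^0}$ with $e^{i\gamma^0 p^0 x^0}$, is a detail the paper leaves implicit.
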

\begin{proof}
The Energy transform can be written as:
\begin{align*}
\mathcal{E}\{\Psi\}(p^0)=\Theta_{L^2}\circ
\mathcal{F}_P(-p^0)\circ\Theta^{-1}_{L^2}\{\Psi\}
\end{align*}
Where $\mathcal{F}_P(-p^0)$ is a Pauli-Fourier transform over
$\mathbb{R}$ and $\Theta$ was defined in Definition \ref{defn:Theta}.
Since the Pauli-Fourier transform is unitary, so is the
Energy transform.
\end{proof}

The energy transform can be applied in the time coordinate of a
Majorana spinor field, $x^0$, after a (linear or spherical)
momentum transform on the space coordinates, $\vec{x}$, to define an
unitary energy-momentum transform:\\
- for the linear case $\mathcal{E}\circ
\mathcal{F}_M:Pinor(\mathbb{R}^4)\to Pinor(\mathbb{R}^4)$;\\
- for the spherical case $\mathcal{E}\circ
\mathcal{H}_M:Pinor(\mathbb{R}^4)\to Pinor(\mathbb{R}\times
\mathbb{S}^3)$.

\section{Conclusion}

There are Poincare group representations
on complex Hilbert spaces, like the Dirac spinor field, 
or real Hilbert spaces, like the electromagnetic field
tensor. Therefore, the study of the Poincare group representations
should be independent on whether the representations are defined on
real or complex Hilbert spaces.

We showed that the Majorana spinor field with finite mass is an
unitary irreducible spin one-half representation 
of the Poincare group on a real Hilbert space.

Since the Bargmann-Wigner
equations are valid for all spins and are based on
the free Dirac equation, these results
open the possibility to study Poincare group representations with
arbitrary spins on real Hilbert spaces.

\section*{Acknowledgements}
Leonardo Pedro's work was supported by FCT under contract
SFRH/BD/70688/2010.
Leonardo thanks R. M. Fonseca, J. Mourao, J. Natario for the important suggestions and corrections.
Leonardo thanks  A. Carmona, D. Emmanuel-Costa, A. Moita, P.B. Pal,
N. Ribeiro, J.C. Romao, J.I. Silva-Marcos for the helpful discussions.
\addcontentsline{toc}{section}{References}
\bibliography{qed}{}
\bibliographystyle{elsarticle-num}
\end{document}